\theoremstyle{plain} 
\newtheorem{theorem}{Theorem}
\newtheorem{corollary}{Corollary}
\newtheorem{definition}{Definition}
\newtheorem{lemma}{Lemma}
\theoremstyle{definition} \newtheorem{remark}{Remark}
\theoremstyle{definition} 
\title{Evaluating Multiple Guesses by an Adversary via a Tunable Loss Function}
\author{Gowtham R. Kurri, Oliver Kosut, Lalitha Sankar
\thanks{The authors are with the School of Electrical, Computer and Energy Engineering at Arizona State University. Email: {\tt gkurri@asu.edu, okosut@asu.edu, lsankar@asu.edu}}
\thanks{This work is supported in part by NSF grants CIF-1901243, CIF-1815361, and CIF-2007688.}
}
\begin{document}
\maketitle
\begin{abstract}
    We consider a problem of guessing, wherein an adversary is interested in knowing the value of the realization of a discrete random variable $X$ on observing another correlated random variable $Y$. The adversary can make multiple (say, $k$) guesses. The adversary's guessing strategy is assumed to minimize $\alpha$-loss, a class of tunable loss functions parameterized by $\alpha$. It has been shown before that this loss function captures well known loss functions including the exponential loss ($\alpha=1/2$), the log-loss ($\alpha=1$) and the $0$-$1$ loss ($\alpha=\infty$). We completely characterize the optimal adversarial strategy and the resulting expected $\alpha$-loss, thereby recovering known results for $\alpha=\infty$. We define an information leakage measure from the $k$-guesses setup and derive a condition under which the leakage is unchanged from a single guess.
    %
    %We introduce a tunable loss function for this setup, which we call as $k$-$\alpha$-loss, as a figure of merit motivated by $\alpha$-loss recently introduced by Liao et al.~[\emph{IEEE Transactions on Information theory}, 2020] which bridges logarithmic loss ($\alpha=1$) and $0$-$1$ loss ($\alpha=\infty$). The adversary seeks to devise a guessing strategy minimizing the expected $k$-$\alpha$-loss. We completely characterize this minimal expected $k$-$\alpha$-loss and derive expressions for minimal expected log-loss and minimal expected $0$-$1$ loss for $k$ guesses as corollaries.
\end{abstract}
\section{introduction}
The classical guessing problem involves an adversary interested in finding the value of a realization of a discrete random variable $X$ by asking a series of questions in an adaptive manner until an affirmative answer is received. A commonly used performance metric for the guessing problem is the expected number of guesses required until $X$ is guessed correctly, or more generally a moment of this number. Massey~\cite{Massey} established a lower bound on the expected number of guesses in terms of the entropy of $X$. Later, Arikan~\cite{Arikan} investigated the problem of bounding the moments of the number of guesses in terms of the R\'{e}nyi entropy~\cite{renyi1961measures} of $X$. Further connections between R\'{e}nyi entropy and guessing are explored in~\cite{SasonV18,SalamatianHBCM19,ChristiansenD13,MerhavC20}.

%by Sason and Verd\'{u}~\cite{SasonV18}, and Salamatian et al.~\cite{SalamatianHBCM19}. For an exhaustive set of references on guessing, see~\cite{ChristiansenD13,MerhavC20}.

We study the guessing problem where an adversary makes a fixed number of guesses. Such a setting finds applications in several practical scenarios. For example, an adversary is allowed several guesses to login with a password before getting locked-out. We consider a setup where an adversary is interested in guessing the unknown value of a random variable $X$ on observing another correlated random variable $Y$, where $X$ and $Y$ are jointly distributed according to $P_{XY}$ over the finite support $\mathcal{X}\times\mathcal{Y}$. Since the adversary makes a fixed number of guesses $k$, we focus on evaluating the adversary's success using loss functions that in turn can measure the information leaked by $Y$ about $X$. 
%The adversary is allowed to make $k$ non-adaptive \emph{random} guesses (henceforth referred to as just guesses) and its 
To this end, we model the adversary's strategy using $\alpha$-loss, a class of tunable loss functions parameterized by $\alpha \in (0,\infty]$ ~\cite{LiaoKS20,SypherdDSK19}. This class captures the well-known exponential loss ($\alpha=1/2$)~\cite{FREUND1997119}, log-loss ($\alpha=1$)~\cite{MerhavF1998,NguyenWJ09,CourtadeW11}, and the 0-1 loss ($\alpha=\infty$)~\cite{NguyenWJ09,BartlettJM06}. The adversary then seeks to find the optimal (possibly randomized) guessing strategy that minimizes the expected $\alpha$-loss over $k$ guesses. 

Devising \emph{guessing strategies} with the quest to optimize certain performance metrics of an adversary has several applications in information theory and related fields; this includes sequential decoding~\cite{Arikan}, guessing codewords~\cite{PfisterS04}, botnet attacks~\cite{SalamatianHBCM19,MerhavC20}, to name a few. In~\cite{SalamatianHBCM19}, the authors consider a guessing problem with a fixed number of guesses allowing for randomized guessing strategies (similar to our setting) and analyze the exponential behaviour of the probability of success in guessing the sequences. A closely related work is that of \emph{maximal leakage}~\cite{Issaetal} which captures the information leaked when an adversary maximizes its probability of correctly guessing (equivalent to minimizing $0$-$1$ loss) an unknown function of $X$; they further generalize this notion to $k$-guesses, and they show the resulting leakage measure is unchanged. 

%In work that was the main inspiration for our definitions, \cite{Issaetal} introduce the \emph{maximal leakage}, which captures the information leaked to an adversary by its probability of correctly guessing (equivalent to minimizing 0-1 loss) an unknown variable; they further generalize this notion to $k$-guesses, and they show the resulting leakage measure is unchanged.

%we define a natural gezation of \emph{$\alpha$-loss} introduced by Liao et al. in a similar setting where only a single guess is allowed to incorporate $k$ guesses and call it as \emph{$k$-$\alpha$-loss}. $\alpha$-loss enjoys an interesting property that it gets simplified to logarithmic loss (log-loss)~\cite{MerhavF1998,NguyenWJ09,CourtadeW11} and probability of error, i.e., soft version of $0$-$1$ loss~\cite{NguyenWJ09,BartlettJM06}. 
 %Note that this leads to an optimization problem over the probability simplex associated with $k$ random variables. 

Our main contributions are as follows:
\begin{itemize}
    \item We completely characterize the minimal expected $\alpha$-loss for $k$ guesses (Theorem~\ref{thm:minimalalphaloss}), thereby recovering known results for $\alpha=\infty$~\cite{Issaetal}. %, thereby recovering known results for $0$-$1$ loss ($\alpha=\infty$). 
    To the best of our knowledge, such a result even for log-loss ($\alpha=1$) under multiple guesses was not explored earlier. 
    We derive a technique for transforming the optimization problem over the probability simplex associated with multiple random variables to that of with a single random variable using tools drawn from duality in linear programming, which may be of independent interest (Lemma~\ref{theorem:probclass1}). 
    \item We define a measure of {information leakage} for $k$ guesses of an adversary motivated by $\alpha$-leakage~\cite[Definition~5]{LiaoKS20} and show that it does not change with the number of guesses for a class of probability distributions $P_{XY}$ (Theorem~\ref{thm2}).
\end{itemize}
\section{Background and Problem Definition}\label{section:def}
We first review $\alpha$-loss and then define the minimal expected $\alpha$-loss for $k$ guesses. Later, we define a measure of information leakage based on this.
\begin{definition}[$\alpha$-loss~\cite{LiaoKS20,SypherdDSK19}]
For $\alpha\in(0,1)\cup(1,\infty)$, the $\alpha$-loss is a function defined from $[0,1]$ to $\mathbb{R}_+$ as
\begin{align}
    \ell_\alpha(p):=\frac{\alpha}{\alpha-1}\left(1-p^{\frac{\alpha-1}{\alpha}}\right).
\end{align}
It is defined by continuous extension for $\alpha=1$ and $\alpha=\infty$, respectively, and is given by
\begin{align}
    \ell_1(p)=\log{\frac{1}{p}},\  \ell_\infty(p)=1-p.
\end{align}
\end{definition}
Notice that $\ell_\alpha(p)$ is decreasing in $p$.
% \begin{definition}[$\alpha$-loss~\cite{LiaoKS20,SypherdDSK19}]
% Let random variables $X,Y,$ and $\hat{X}$ form a Markov chain $X-Y-\hat{X}$, where $\hat{X}$ is an estimator of $X$ based on observing $Y$. The $\alpha$-loss of a strategy $P_{\hat{X}|Y}$ for estimating $X$ from $Y$ is  
% \begin{align*}
%     \ell_\alpha(x,y,P_{\hat{X}|Y})=\frac{\alpha}{\alpha-1}\left(1-P_{\hat{X}|Y}(x|y)^\frac{\alpha-1}{\alpha}\right),
% \end{align*}
% where $\alpha\in(0,1)\cup (1,\infty)$. It is defined by continuous extension for $\alpha=1$ and $\alpha=\infty$, respectively, and is given by
% \begin{align*}
%     \ell_1(x,y,P_{\hat{X}|Y})&=\log{\frac{1}{P_{\hat{X}|Y}(x|y)}},\\
%     \ell_{\infty}(x,y,P_{\hat{X}|Y})&=1-P_{\hat{X}|Y}(x|y).
% \end{align*}
% \end{definition}
%$\alpha$-loss can be interpreted as the loss incurred by an adversary in estimating $X$ from $Y$ in a single guess. We define $k$-$\alpha$-loss which captures the loss incurred by an adversary when $k$ guesses are allowed.
\begin{definition}[Minimal expected $\alpha$-loss for $k$ guesses]
Consider random variables $(X,Y)\sim P_{XY}$ and an adversary that makes $k$ guesses $\hat{X}_{[1:k]}=\hat{X}_1,\hat{X}_2,\dots,\hat{X_k}$ %(denoted as $\hat{X}_{[1:k]}$) 
on observing $Y$ such that $X-Y-\hat{X}_{[1:k]}$ is a Markov chain. Let $P_{\hat{X}_{[1:k]}|Y}$ be a strategy for estimating $X$ from $Y$ in $k$ guesses. For $\alpha\in(0,\infty],$ the minimal expected $\alpha$-loss for $k$ guesses is defined as
\begin{multline}\label{minimallossdef} 
 \mathcal{ME}^{(k)}_\alpha(P_{XY})\\ :=\min_{P_{\hat{X}_{[1:k]}|Y}}
    \sum_{x,y} P_{XY}(x,y) \ell_\alpha\left(\mathrm{P}\left(\bigcup_{i=1}^k (\hat{X}_i=x|Y=y)\right)\right).  
\end{multline}
\end{definition}
% \begin{definition}[$k$-$\alpha$-loss]
% Let random variables $X,Y,$ and $\hat{X}_1,\hat{X}_2,\dots,\hat{X_k}$ (denoted as $\hat{X}_{[1:k]}$ in the sequel) form a Markov chain $X-Y-\hat{X}_{[1:k]}$, where $\hat{X}_{[1:k]}$ are $k$ estimators of $X$ on observing $Y$. The $k$-$\alpha$-loss of the strategy $P_{\hat{X}_{[1:k]}|Y}$ for estimating $X$ from $Y$ is  
% \begin{multline*}
%     \ell_\alpha^{(k)}(x,y,P_{\hat{X}_{[1:k]}|Y})\\
%     =\frac{\alpha}{\alpha-1}\left(1-\mathrm{P}\left(\bigcup_{i=1}^k(\hat{X}_i=x)|Y=y\right)^\frac{\alpha-1}{\alpha}\right),
% \end{multline*}
% where $\alpha\in(0,1)\cup (1,\infty)$. It is defined by continuous extension for $\alpha=1$ and $\alpha=\infty$, respectively, and is given by
% \begin{align*}
%     \ell_1^{(k)}(x,y,P_{\hat{X}_{[1:k]}|Y})&=\log{\frac{1}{\mathrm{P}\left(\bigcup\limits_{i=1}^k(\hat{X}_i=x)|Y=y\right)}},\\
%     \ell_{\infty}^{(k)}(x,y,P_{\hat{X}_{[1:k]}|Y})&=1-\mathrm{P}\left(\bigcup_{i=1}^k(\hat{X}_i=x)|Y=y\right).
% \end{align*}
% \end{definition}

We interpret $\mathrm{P}\left(\bigcup\limits_{i=1}^k(\hat{X}_i=x)|Y=y\right)$ as the probability of correctly estimating $X=x$ given $Y=y$ in $k$ guesses. An adversary seeks to find the optimal guessing strategy in \eqref{minimallossdef}.
% \begin{align}\label{eqn:optimizationproblem1}
% \min_{P_{\hat{X}_{[1:k]}|Y}}\mathbb{E}\left[\ell_\alpha^{(k)}(X,Y,P_{\hat{X}_{[1:k]}|Y})\right].
% \end{align}
Note that the optimization problem in \eqref{minimallossdef} was solved for a special case of $k=1$ by Liao \textit{et al.}~\cite[Lemma~1]{LiaoKS20}.
Notice that
% \begin{align*}
%     \min_{P_{\hat{X}_{[1:k]}|Y}}&\mathbb{E}\left[\ell_\alpha^{(k)}(X,Y,P_{\hat{X}_{[1:k]}|Y})\right]\\
%     &=\sum_yP_Y(y)\min_{P_{\hat{X}_{[1:k]}|Y=y}}\mathbb{E}\left[\ell_\alpha^{(k)}(X,y,P_{\hat{X}_{[1:k]}|Y=y})\right],
% \end{align*}
\begin{align}\label{eqn:simpli}
    \mathcal{ME}^{(k)}_\alpha(P_{XY})=\sum_yP_Y(y)\mathcal{ME}_\alpha^{(k)}(P_{X|Y=y}),
\end{align}
where we have slightly abused the notation in the R.H.S. of \eqref{eqn:simpli}. Hence, in view of \eqref{eqn:simpli}, in order to solve the optimization problem in \eqref{minimallossdef}, it suffices to solve for a case where $Y=\emptyset$, i.e., 
% \begin{align}\label{eqn:optimizationproblem}
% \min_{P_{\hat{X}_{[1:k]}}}\mathbb{E}\left[\ell_\alpha^{(k)}(X,P_{\hat{X}_{[1:k]}})\right],
% \end{align}
\begin{equation}\label{eqn:optimizationproblem}
   \mathcal{ME}_\alpha^{(k)}(P_X):= \min_{P_{\hat{X}_{[1:k]}}}
    \sum_x P_X(x) \ell_\alpha\left(P\left(\bigcup_{i=1}^k (\hat{X}_i=x)\right)\right).
\end{equation}
 Also, in the sequel, it suffices to consider the optimization problem in \eqref{eqn:optimizationproblem} only for the case where $k<n$, where $P_X$ is supported on $\mathcal{X}=\{x_1,x_2,\dots,x_n\}$ because if $k\geq n$, we have
 $\mathcal{ME}_\alpha^{(k)}(P_X)=0$,
since a strategy $P^*_{\hat{X}_{[1:k]}}$ such that $P^*_{\hat{X}_{[1:n]}}(x_1,x_2,\dots,x_n)=1$ is optimal.

Motivated by $\alpha$-leakage~\cite[Definition~5]{LiaoKS20} which captures how much information an adversary can learn about a random variable $X$ from a correlated random variable $Y$ when a single guess is allowed, we define a leakage measure which captures the information an adversary can learn when $k$ guesses are allowed. This definition is also related to maximal leakage under $k$ guesses \cite{Issaetal}.
\begin{definition}[$\alpha$-leakage with $k$ guesses]
Given a joint distribution $P_{XY}$ and $k$ estimators $\hat{X}_1,\hat{X}_2,\dots,\hat{X}_k$ with the same support as $X$, the $\alpha$-leakage from $X$ to $Y$ with $k$ guesses is defined as 
\begin{multline}\label{defn:kalphaleakage}
    \mathcal{L}^{(k)}_\alpha(X\rightarrow Y)\triangleq\\ \frac{\alpha}{\alpha-1}\log{\frac{\max\limits_{P_{\hat{X}_{[1:k]}|Y}}\mathbb{E}\left[\mathrm{P}\left(\bigcup\limits_{i=1}^k(\hat{X}_i=X)|Y\right)^{\frac{\alpha-1}{\alpha}}\right]}{\max\limits_{P_{\hat{X}_{[1:k]}}}\mathbb{E}\left[\mathrm{P}\left(\bigcup\limits_{i=1}^k(\hat{X}_i=X)\right)^{\frac{\alpha-1}{\alpha}}\right]}},
\end{multline}
for $\alpha\in(0,1)\cup (1,\infty)$.
\end{definition}
\section{Main Results}\label{section:mainresults}
\begin{theorem}[Minimal expected $\alpha$-loss for $k$ guesses]\label{thm:minimalalphaloss}
Consider a $P_X$ supported on $\mathcal{X}=\{x_1,x_2,\dots,x_n\}$ such that $p_1\geq p_2\geq \dots \geq p_n$, where $p_i:=P_X(x_i)$, for $i\in[1:n]$. Then the minimal expected $\alpha$-loss for $k$ guesses is given by
\begin{align}\label{eqn:thm11}
    \mathcal{ME}^{(k)}_\alpha(P_X)= \frac{\alpha}{\alpha-1}\sum\limits_{i=s^*}^np_i\left(1-\left(\frac{(k-s^*+1)p_i^\alpha}{\sum_{j=s^*}^np_j^\alpha}\right)^\frac{\alpha-1}{\alpha}\right),
\end{align}
where 
\begin{align}\label{eqn:thm1sstar}
    s^*=\min\left\{r\in\{1,2,\ldots,k\}: \frac{(k-r+1)p_r^\alpha}{\sum_{i=r}^np_i^\alpha}\leq 1\right\}.
\end{align}
\end{theorem}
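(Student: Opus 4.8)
The plan is to turn the combinatorial optimization over the joint strategy into a finite-dimensional convex program and solve it by a water-filling argument. First I would invoke Lemma~\ref{theorem:probclass1} to replace the optimization over $P_{\hat X_{[1:k]}}$ by an optimization over the single vector of coverage probabilities $q_i:=\mathrm{P}\big(\bigcup_{j=1}^{k}(\hat X_j=x_i)\big)$. The point is that $q_i$ is exactly the marginal probability that $x_i$ belongs to the (random) guessed set $\{\hat X_1,\dots,\hat X_k\}$, and the set of coverage vectors achievable by some strategy is precisely the polytope $\{q:0\le q_i\le 1,\ \sum_i q_i\le k\}$: the bound $\sum_i q_i\le k$ holds because there are at most $k$ guesses, and every point of this polytope is a convex combination of indicators of subsets of size at most $k$, each of which is realized by a deterministic strategy. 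With this reduction the problem becomes
\[
\min_{q}\ \sum_{i=1}^{n} p_i\,\ell_\alpha(q_i)\quad\text{subject to}\quad 0\le q_i\le 1,\ \ \sum_{i=1}^{n} q_i\le k.
\]

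Next I would observe that, for every $\alpha\in(0,1)\cup(1,\infty)$, $\ell_\alpha$ is strictly decreasing and convex on $[0,1]$ (directly from $\ell_\alpha'(q)=-q^{-1/\alpha}$), so the objective is a separable convex function and the program is convex. Since the objective strictly decreases in each $q_i$ and $k<n$ precludes setting all $q_i=1$, the budget constraint must be active, i.e.\ $\sum_i q_i=k$ at the optimum. I would then form the Lagrangian with multiplier $\lambda\ge 0$ for the budget and analyze the KKT conditions. Stationarity at a coordinate with $0<q_i<1$ gives $p_i\,\ell_\alpha'(q_i)=-\lambda$, that is $q_i=(p_i/\lambda)^{\alpha}\propto p_i^{\alpha}$, while the constraint $q_i\le 1$ caps the coordinates whose ideal value would exceed one. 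Because $p_1\ge\cdots\ge p_n$ and the ideal allocation $q_i\propto p_i^{\alpha}$ is monotone in $p_i$, the capped coordinates form a prefix $\{1,\dots,s-1\}$; setting $q_i=1$ there and $q_i=(k-s+1)\,p_i^{\alpha}/\sum_{j\ge s}p_j^{\alpha}$ for $i\ge s$ automatically yields $\sum_i q_i=k$.

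Finally I would pin down the threshold. Capping the prefix $\{1,\dots,s-1\}$ is feasible exactly when the largest remaining allocation $q_s=(k-s+1)\,p_s^{\alpha}/\sum_{j\ge s}p_j^{\alpha}$ does not exceed one (monotonicity then forces $q_i\le 1$ for all $i\ge s$), while capping fewer coordinates is infeasible; this selects $s^*$ as the smallest $r$ with $(k-r+1)p_r^{\alpha}/\sum_{i\ge r}p_i^{\alpha}\le 1$, matching \eqref{eqn:thm1sstar}. I would also check that the minimizing set in \eqref{eqn:thm1sstar} is nonempty, as it contains $r=k$ (since $p_k^{\alpha}\le\sum_{i\ge k}p_i^{\alpha}$), so $s^*\le k$ is well defined. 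Convexity makes the KKT point a global minimizer, and substituting the optimal $q_i$ (using $\ell_\alpha(1)=0$ to discard the prefix terms) gives \eqref{eqn:thm11}. I expect the main obstacle to be the first step: rigorously showing that the achievable coverage vectors fill the entire polytope $\{q:0\le q_i\le 1,\ \sum_i q_i\le k\}$, which is exactly where the linear-programming duality of Lemma~\ref{theorem:probclass1} does the real work; once that reduction is secured, the remaining water-filling analysis is routine convex optimization.
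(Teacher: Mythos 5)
Your proposal is correct and follows essentially the same route as the paper: reduce the problem via Lemma~\ref{theorem:probclass1} to a separable convex program over the coverage probabilities and solve it by KKT/water-filling, with the same threshold $s^*$ emerging from the cap $t_i\le 1$. The only cosmetic difference is that you relax the budget to $\sum_i q_i\le k$ and argue it is active (the paper instead derives the equality $\sum_i t_i=k$ from Lemma~\ref{fact:alphaloss2noneq} on non-repeating guesses), and your parenthetical claim that the \emph{entire} polytope $\{0\le q_i\le 1,\ \sum_i q_i\le k\}$ is achievable is off only at the irrelevant vertex $q=0$.
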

% \begin{align}
%     &\min_{P_{\hat{X}_{[1:k]}}}\mathbb{E}\left[\ell_\alpha^{(k)}(X,P_{\hat{X}_{[1:k]}})\right]\nonumber\\
%   & =\begin{cases} \frac{\alpha}{\alpha-1}\left(1-k^{\frac{\alpha-1}{\alpha}}\exp{\left(\frac{1-\alpha}{\alpha}H_\alpha(X)\right)}\right),\\ \hspace{12pt}\emph{if}\ \frac{p_1^\alpha}{\sum\limits_{i=1}^np_i^\alpha}\leq \frac{1}{k},\\
%   \frac{\alpha}{\alpha-1}\sum\limits_{i=s}^np_i\left(1-\left(\frac{(k-s+1)p_i^\alpha}{\sum\limits_{j=s}^np_j^\alpha}\right)^\frac{\alpha-1}{\alpha}\right),\\
%   \hspace{12pt}\emph{if}\ \frac{(k-s+2)p_{s-1}^\alpha}{\sum\limits_{i=s-1}^np_i^\alpha}>1, \frac{(k-s+1)p_s^\alpha}{\sum\limits_{i=s}^np_i^\alpha}\leq 1,\ 2\leq s\leq k\nonumber,
% \end{cases}
% \end{align}
% where $H_{\alpha}(X)$ is the R\'{e}nyi entropy of order $\alpha$~\cite{renyi1961measures} defined by $H_{\alpha}(x)=\frac{1}{1-\alpha}\log{\left(\sum\limits_{i=1}^np_i^\alpha\right)}$.
\begin{remark}
It can be inferred from Theorem~\ref{thm:minimalalphaloss} that in the optimal guessing strategy, the adversary guesses the $s^*-1$ most likely outcomes, and uses an updated tilted distribution on the rest of the outcomes (see also \eqref{eqn:newcam}). For the special case when $k=s^*=2$, this optimal strategy is exactly the same as that of a seemingly different guessing problem considered in \cite[Section~II-B]{HuleihelSM17}.
\end{remark}
\begin{remark}
Notice that whenever $s^*=1$ in \eqref{eqn:thm1sstar}, the expression in \eqref{eqn:thm11} simplifies to
\begin{align}\label{eqn:thm1recover}
     \frac{\alpha}{\alpha-1}\left(1-k^{\frac{\alpha-1}{\alpha}}\exp{\left(\frac{1-\alpha}{\alpha}H_\alpha(X)\right)}\right),
\end{align}
where $H_{\alpha}(X)=\frac{1}{1-\alpha}\log{\left(\sum_{i=1}^np_i^\alpha\right)}$ is the R\'{e}nyi entropy of order $\alpha$~\cite{renyi1961measures}. Also, note that for the special case of $k=1$, we always have $s^*=1$, thereby recovering \cite[Lemma~1]{LiaoKS20}.
\end{remark}
% The following two corollaries give the expressions for the minimal expected log-loss ($\alpha=1$) for $k$ guesses and the minimal expected $0$-$1$ loss ($\alpha=\infty$) for $k$ guesses, respectively.
\begin{corollary}[Minimal expected log-loss \{$\alpha=1$\} for $k$ guesses]\label{corollary1}
Under the notations of Theorem~\ref{thm:minimalalphaloss}, the minimal expected log-loss for $k$ guesses is given by
\begin{align}
    \mathcal{ME}^{(k)}_1(P_X)&=H(X)-H_{s^*}\left(p_1,p_2,\dots,p_{{s^*}-1},\sum_{i={s^*}}^np_i\right)\nonumber\\
    &\hspace{24pt}-\left(\sum\limits_{i={s^*}}^np_i\right)\log{(k-{s^*}+1)},
\end{align}
where $s^*=\min\left\{r\in\{1,2,\ldots,k\}: \frac{(k-r+1)p_r}{\sum_{i=r}^np_i}\leq 1\right\}$
and $H_{s^*}(q_1,q_2,\dots,q_{s^*}):=\sum_{i=1}^{s^*}q_i\log{\frac{1}{q_i}}$ is the entropy function.
% \begin{align}
%     &\min_{P_{\hat{X}_{[1:k]}}}\mathbb{E}\left[\ell_1^{(k)}(X,P_{\hat{X}_{[1:k]}})\right]\nonumber\\
%   &=\begin{cases} H(X)-\log{k},\  \emph{if}\ {p_1}\leq \frac{1}{k},\\
%   H(X)-H_s(p_1,p_2,\dots,p_{s-1},\sum\limits_{i=s}^np_i)\\\hspace{24pt}-\left(\sum\limits_{i=s}^np_i\right)\log{(k-s+1)},\\
%   \hspace{12pt}\emph{if}\ \frac{(k-s+2)p_{s-1}}{\sum\limits_{i=s-1}^np_i}>1, \frac{(k-s+1)p_s}{\sum\limits_{i=s}^np_i}\leq 1,\ 2\leq s\leq k\nonumber,
% \end{cases}
% \end{align}
% where $H_s(q_1,q_2,\dots,q_s)$ is the entropy function of the p.m.f. $(q_1,q_2,\dots,q_s)$ defined by $H_s(q_1,q_2,\dots,q_s):=\sum\limits_{i=1}^sq_i\log{\frac{1}{q_i}}$.
\end{corollary}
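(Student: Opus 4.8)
The plan is to derive the corollary as the $\alpha\to 1$ limit of Theorem~\ref{thm:minimalalphaloss}, exploiting that $\ell_\alpha(p)\to\ell_1(p)=\log(1/p)$. First I would justify that the limit may be taken on the value function, i.e.\ $\mathcal{ME}^{(k)}_1(P_X)=\lim_{\alpha\to 1}\mathcal{ME}^{(k)}_\alpha(P_X)$. This holds because for each fixed strategy $P_{\hat{X}_{[1:k]}}$ the objective in \eqref{eqn:optimizationproblem} is continuous in $\alpha$, and the minimum of a jointly continuous function over the compact probability simplex depends continuously on the parameter $\alpha$; hence it suffices to evaluate the limit of the closed form in \eqref{eqn:thm11}.

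Before passing to the limit I must control the index $s^*$ of \eqref{eqn:thm1sstar}, which is defined through the threshold $g_r(\alpha):=\frac{(k-r+1)p_r^\alpha}{\sum_{i=r}^n p_i^\alpha}\le 1$. Each $g_r$ is continuous in $\alpha$, and $g_r(1)$ is precisely the threshold appearing in the corollary's definition of $s^*$. As long as this inequality is strict at $\alpha=1$ (the boundary case follows from continuity of the overall expression), the minimizing index is locally constant, so $s^*(\alpha)=s^*(1)$ for $\alpha$ near $1$, and I may treat $s^*$ as a fixed constant throughout the limit.

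With $s^*$ fixed, the crux is a first-order expansion that resolves the $0\cdot\infty$ form created by the prefactor $\frac{\alpha}{\alpha-1}\to\infty$ multiplying a vanishing bracket. Writing $\epsilon:=\frac{\alpha-1}{\alpha}$ and $c_i(\alpha):=\frac{(k-s^*+1)p_i^\alpha}{\sum_{j=s^*}^n p_j^\alpha}$, the $i$-th summand of \eqref{eqn:thm11} is $\frac{1}{\epsilon}\,p_i\bigl(1-c_i(\alpha)^{\epsilon}\bigr)$. Since $c_i(\alpha)^{\epsilon}=\exp(\epsilon\log c_i(\alpha))=1+\epsilon\log c_i(\alpha)+o(\epsilon)$ as $\alpha\to 1$, with $c_i(\alpha)\to c_i(1)=\frac{(k-s^*+1)p_i}{\sum_{j=s^*}^n p_j}$ finite and positive, each summand tends to $-p_i\log c_i(1)$, giving
\[
\mathcal{ME}^{(k)}_1(P_X)=-\sum_{i=s^*}^n p_i\log\!\left(\frac{(k-s^*+1)p_i}{\sum_{j=s^*}^n p_j}\right).
\]

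Finally I would simplify algebraically. Setting $P:=\sum_{i=s^*}^n p_i$ and splitting the logarithm, the right-hand side equals $-\sum_{i=s^*}^n p_i\log p_i+P\log P-P\log(k-s^*+1)$; recognizing $-\sum_{i=s^*}^n p_i\log p_i+P\log P=H(X)-H_{s^*}(p_1,\dots,p_{s^*-1},P)$ then yields the claimed identity. I expect the main obstacle to be precisely this indeterminacy together with the possibility that $s^*$ jumps as $\alpha\to 1$; once both are handled by the continuity and expansion arguments above, the remaining computation is routine bookkeeping of logarithms.
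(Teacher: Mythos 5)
Your proposal is correct and follows essentially the same route as the paper, which simply states that the corollary "follows by taking the limit $\alpha\to 1$ using L'H\^{o}pital's rule in the result of Theorem~\ref{thm:minimalalphaloss} and rearranging the terms"; your expansion $c_i(\alpha)^{\epsilon}=1+\epsilon\log c_i(\alpha)+o(\epsilon)$ is just that L'H\^{o}pital step made explicit. You additionally justify interchanging the limit with the minimum and the local constancy of $s^*$, details the paper omits, and your final bookkeeping matches the stated formula.
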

%We remark here that the minimal expected log-loss for $k$ guesses does not appear in the literature to the best of our knowledge.
\begin{corollary}[Minimal expected $0$-$1$ loss \{$\alpha=\infty$\} for $k$ guesses]\label{corollary2}
Under the notations of Theorem~\ref{thm:minimalalphaloss}, the minimal expected $0$-$1$ loss for $k$ guesses is given by
\begin{align}
   \mathcal{ME}^{(k)}_\infty(P_X)
    &=1-\sum_{i=1}^kp_i\nonumber\\
    &=1-\max_{\substack{a_1,a_2,\dots,a_k:\\ a_l\neq a_m,l\neq m}}\sum_{i=1}^kP_X(a_i).
\end{align}
\end{corollary}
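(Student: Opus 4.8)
The plan is to obtain Corollary~\ref{corollary2} by passing to the limit $\alpha \to \infty$ in the closed form of Theorem~\ref{thm:minimalalphaloss}. The first step is to identify $\lim_{\alpha\to\infty} s^*$. Rewriting the defining inequality in \eqref{eqn:thm1sstar} as $(k-r)p_r^\alpha \le \sum_{i=r+1}^n p_i^\alpha$, I would argue that as $\alpha\to\infty$ each side is governed by its largest base, the right-hand side being dominated by $p_{r+1}^\alpha$. Hence for any $r<k$ with $p_r>p_{r+1}$ the inequality eventually fails, while for $r=k$ the left-hand side is $0$ and the inequality holds for every $\alpha$. Consequently, for all sufficiently large $\alpha$, $s^*$ stabilizes at the first index of the block of equal probabilities that contains position $k$; in the generic case of distinct $p_i$ this is simply $s^*=k$.

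Fixing $s^*$ at this limiting value $s$, the second step is to evaluate the limit of \eqref{eqn:thm11} term by term. Since $\tfrac{\alpha}{\alpha-1}\to 1$ and the exponent $\tfrac{\alpha-1}{\alpha}\to 1$, the only nontrivial limits are those of the ratios $\tfrac{(k-s+1)p_i^\alpha}{\sum_{j=s}^n p_j^\alpha}$. Writing $m$ for the multiplicity of the top value $p_s$ among $\{p_s,\dots,p_n\}$, the denominator is asymptotic to $m\,p_s^\alpha$, so for indices $i$ inside this top block the ratio tends to $\tfrac{k-s+1}{m}$, while for $i$ below the block it tends to $0$. The bracketed factor therefore tends to $1-\tfrac{k-s+1}{m}$ on the block and to $1$ below it, and summing the surviving contributions gives $\bigl(m-(k-s+1)\bigr)p_s + \sum_{i=s+m}^n p_i$. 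Recognizing $m-(k-s+1)=s+m-1-k$ as the number of block indices lying strictly beyond position $k$, this collapses exactly to $\sum_{i=k+1}^n p_i = 1-\sum_{i=1}^k p_i$, the claimed value.

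The second displayed equality is the elementary observation that, among all choices of $k$ distinct symbols, the sum $\sum_{i=1}^k P_X(a_i)$ is maximized by selecting the $k$ most likely outcomes, whose probabilities are $p_1,\dots,p_k$ by the assumed ordering; a standard rearrangement argument makes this precise.

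The main obstacle is the bookkeeping around ties: the value of $s^*$ in the limit is not always $k$, and one must verify that the block/multiplicity cancellation above is valid whether or not there is a strict gap at position $k$ (in particular when $p_k=p_{k+1}$). As a cross-check that avoids this entirely, I would also note the direct derivation: for $\alpha=\infty$, $\ell_\infty(p)=1-p$ turns \eqref{eqn:optimizationproblem} into $1-\max_{P_{\hat{X}_{[1:k]}}}\sum_x P_X(x)\,\mathrm{P}\bigl(\bigcup_{i=1}^k(\hat{X}_i=x)\bigr)$; since the objective is linear in the strategy, the maximum is attained at a deterministic (extreme-point) strategy that commits to $k$ distinct guesses, reducing the problem to the rearrangement step above and giving $1-\sum_{i=1}^k p_i$ directly.
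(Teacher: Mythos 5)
Your proposal is correct and follows the same route as the paper: the paper's entire proof is the one-line remark that Corollary~\ref{corollary2} follows by taking the limit $\alpha\to\infty$ in Theorem~\ref{thm:minimalalphaloss}, and your limit computation --- including the identification of the limiting $s^*$ as the first index of the block of equal probabilities containing position $k$ and the multiplicity cancellation $m-(k-s+1)=s+m-1-k$, details the paper leaves entirely implicit --- carries that out correctly. Your alternative direct argument (linearity of the $\ell_\infty$ objective in the strategy, hence optimality of a deterministic set of $k$ distinct guesses) is also valid and serves as a clean self-contained cross-check, though it is not the path the paper takes.
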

%A form of the minimal expected $0$-$1$ loss for $k$ guesses appears in Issa \textit{et al.}~\cite{Issaetal}.
The following theorem shows the robustness of $\alpha$-leakage
to the number of guesses for a class of probability distributions $P_{XY}$. Let $P_{X|Y=y}^{(\alpha)}$ denote the tilted distribution of $P_{X|Y=y}$, i.e., $P_{X|Y}^{(\alpha)}(x|y)=\frac{P_{X|Y}(x|y)^\alpha}{\sum_xP_{X|Y}(x|y)^\alpha}$.
\begin{theorem}[Robustness of $\alpha$-leakage to number of guesses]\label{thm2}
Consider a $P_{XY}$ such that $P_{X|Y}^{(\alpha)}(x|y)\leq \frac{1}{k}$, for all $x,y$ and $P_{X}^{(\alpha)}(x)\leq \frac{1}{k}$, for all $x$. Then
\begin{align}
    \mathcal{L}^{(k)}_\alpha=\mathcal{L}^{(1)}_\alpha.
\end{align}
\end{theorem}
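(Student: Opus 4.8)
The plan is to reduce Theorem~\ref{thm2} to the $s^*=1$ regime of Theorem~\ref{thm:minimalalphaloss}. First I would show that the two hypotheses force the threshold in \eqref{eqn:thm1sstar} to collapse to $1$. Evaluating the defining inequality of \eqref{eqn:thm1sstar} at $r=1$ for $P_X$ gives $\tfrac{k\,p_1^\alpha}{\sum_i p_i^\alpha}=k\,P_X^{(\alpha)}(x_1)$, and since the hypothesis $P_X^{(\alpha)}(x)\le 1/k$ holds in particular at the most likely symbol $x=x_1$, this quantity is $\le 1$; hence $s^*=1$ for $P_X$. The identical argument applied to each conditional law $P_{X|Y=y}$, whose tilted version satisfies $P_{X|Y}^{(\alpha)}(x|y)\le 1/k$ by hypothesis, yields $s^*=1$ for every $y$.

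Next I would connect the ratio in \eqref{defn:kalphaleakage} to $\mathcal{ME}^{(k)}_\alpha$. Writing $V^{(k)}_\alpha(Q)$ for the optimal value of $\mathbb{E}_{Q}\bigl[\mathrm{P}\bigl(\bigcup_{i=1}^k\{\hat{X}_i=X\}\bigr)^{(\alpha-1)/\alpha}\bigr]$ over $k$-guess strategies, the relation $\ell_\alpha(p)=\tfrac{\alpha}{\alpha-1}(1-p^{(\alpha-1)/\alpha})$ gives $\mathcal{ME}^{(k)}_\alpha(Q)=\tfrac{\alpha}{\alpha-1}\bigl(1-V^{(k)}_\alpha(Q)\bigr)$, so the denominator of \eqref{defn:kalphaleakage} is exactly $V^{(k)}_\alpha(P_X)$. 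For the numerator, because the conditional strategy $P_{\hat{X}_{[1:k]}|Y=y}$ may be chosen separately for each $y$, the optimization separates across $y$ exactly as in \eqref{eqn:simpli}, so the numerator equals $\sum_y P_Y(y)\,V^{(k)}_\alpha(P_{X|Y=y})$.

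Now I would invoke the $s^*=1$ simplification \eqref{eqn:thm1recover}: for any law $Q$ with $s^*=1$ one has $V^{(k)}_\alpha(Q)=k^{(\alpha-1)/\alpha}\exp\!\bigl(\tfrac{1-\alpha}{\alpha}H_\alpha(Q)\bigr)=k^{(\alpha-1)/\alpha}V^{(1)}_\alpha(Q)$, i.e.\ the guess count enters only through the common prefactor $k^{(\alpha-1)/\alpha}$ (recall $s^*=1$ always holds for $k=1$). Applying this to $P_X$ for the denominator and to each $P_{X|Y=y}$ for the numerator, and using the first step, both the numerator and the denominator of the $k$-guess ratio become $k^{(\alpha-1)/\alpha}$ times their single-guess counterparts. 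This common factor cancels inside the logarithm in \eqref{defn:kalphaleakage}, whence $\mathcal{L}^{(k)}_\alpha=\mathcal{L}^{(1)}_\alpha$.

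The step requiring the most care is matching the optimization sense of the $\max$ in \eqref{defn:kalphaleakage} with the minimization defining $\mathcal{ME}^{(k)}_\alpha$: since the prefactor $\tfrac{\alpha}{\alpha-1}$ and the exponent $\tfrac{\alpha-1}{\alpha}$ both change sign across $\alpha=1$, one must verify that $V^{(k)}_\alpha(Q)$ is genuinely the value extracted by Theorem~\ref{thm:minimalalphaloss} in both regimes $\alpha\in(0,1)$ and $\alpha\in(1,\infty)$; fortunately \eqref{eqn:thm1recover} supplies a single closed form valid for all $\alpha$, so the $k^{(\alpha-1)/\alpha}$ factorization is uniform. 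A secondary point is justifying the separation of the numerator's optimization over $y$, which is immediate because the objective is additive in $y$ and the feasible conditional strategies are unconstrained across distinct values of $y$.
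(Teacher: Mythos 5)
Your proposal is correct and follows essentially the same route as the paper: both reduce to the $s^*=1$ case of Theorem~\ref{thm:minimalalphaloss} (which the two hypotheses guarantee for $P_X$ and each $P_{X|Y=y}$), so that the numerator and denominator of \eqref{defn:kalphaleakage} each acquire the same factor $k^{(\alpha-1)/\alpha}$ relative to their single-guess values, which cancels inside the logarithm. You spell out several details the paper leaves implicit (the verification that $s^*=1$, the decomposition of the numerator over $y$, and the max/min sign bookkeeping across $\alpha=1$), but the argument is the same.
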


The proofs of Theorems~\ref{thm:minimalalphaloss} and~\ref{thm2} are given in the following section.

\section{Proofs of Main Results}\label{proofs}
We begin with the following lemmas which will be useful in the proof of Theorem~\ref{thm:minimalalphaloss}. It is intuitive to expect that an optimal strategy, $P^*_{\hat{X}_{[1:k]}}$, puts zero weight on ordered tuples $(a_1,a_2,\dots,a_k)$ (denoted as $a_{[1:k]}$ in the sequel) whenever $a_i=a_j$ for some $i\neq j$, since there is no advantage in guessing the same estimate more than once. The following lemma based on the monotonicity of the $\alpha$-loss formalizes this.
\begin{lemma}\label{fact:alphaloss2noneq}
If $P^*_{\hat{X}_{[1:k]}}$ is an optimal strategy for the optimization problem in \eqref{eqn:optimizationproblem}, then 
\begin{align*}
    P^*_{\hat{X}_{[1:k]}}(a_{[1:k]})=0, \ \text{for all} \ a_{[1:k]} \ \text{s.t.} \ a_i=a_j, \ \text{for some} \ i\neq j.
\end{align*}
\end{lemma}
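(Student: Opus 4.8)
The plan is to argue by contradiction via a mass-transport (exchange) argument that exploits the strict monotonicity of $\ell_\alpha$. Suppose an optimal $P^*_{\hat{X}_{[1:k]}}$ assigns positive mass $\delta:=P^*_{\hat{X}_{[1:k]}}(a_{[1:k]})>0$ to some tuple $a_{[1:k]}$ with a repeat, say $a_i=a_j$ for $i\neq j$. Write $q_x:=\mathrm{P}\big(\bigcup_{l=1}^k(\hat{X}_l=x)\big)$ for the coverage probabilities induced by $P^*$, so that the objective in \eqref{eqn:optimizationproblem} is $\sum_x P_X(x)\,\ell_\alpha(q_x)$. First I would record that $\ell_\alpha$ is in fact \emph{strictly} decreasing: $\ell_\alpha'(p)=-p^{-1/\alpha}<0$ for $\alpha\in(0,1)\cup(1,\infty)$, while $\ell_1'(p)=-1/p$ and $\ell_\infty'(p)=-1$.

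Next I would construct a strictly better competitor. Since $a_{[1:k]}$ repeats a symbol, it covers at most $k-1$ distinct symbols; as we may assume $k<n$ (the case $k\ge n$ being trivial) and $P_X$ is fully supported, there is a symbol $x^*\in\mathcal{X}$ absent from $a_{[1:k]}$ with $P_X(x^*)>0$. Let $b_{[1:k]}$ be $a_{[1:k]}$ with its $j$-th entry replaced by $x^*$, and let $\tilde{P}$ be obtained from $P^*$ by moving the mass $\delta$ from $a_{[1:k]}$ onto $b_{[1:k]}$; this is again a valid strategy. The crucial point is that the duplicated symbol $a_j=a_i$ remains covered by $b_{[1:k]}$ through position $i$, so the set of symbols covered by $b_{[1:k]}$ is exactly $\{a_1,\dots,a_k\}\cup\{x^*\}$.

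I would then compute the effect on each $q_x$. For every $x\neq x^*$, the symbol $x$ is covered by $a_{[1:k]}$ if and only if it is covered by $b_{[1:k]}$, so the $-\delta$ from removing $a_{[1:k]}$ and the $+\delta$ from adding $b_{[1:k]}$ cancel and $q_x$ is unchanged; only $q_{x^*}$ increases, by $\delta$. Because $a_{[1:k]}$ does not contain $x^*$ yet carries mass $\delta$, we have $q_{x^*}\le 1-\delta<1$, so $\tilde{q}_{x^*}=q_{x^*}+\delta$ lies in $(q_{x^*},1]$. Strict monotonicity of $\ell_\alpha$ together with $P_X(x^*)>0$ then yields $\sum_x P_X(x)\,\ell_\alpha(\tilde{q}_x)<\sum_x P_X(x)\,\ell_\alpha(q_x)$, contradicting optimality of $P^*$. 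As $a_{[1:k]}$ was an arbitrary repeated tuple, no optimal strategy can place positive mass on any tuple with a repeat.

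The argument is otherwise routine; the one step needing care — and the main obstacle — is guaranteeing the replacement symbol $x^*$ both lies outside $a_{[1:k]}$ and has positive probability. This is precisely where the reduction to $k<n$ and the full-support convention for $\mathcal{X}$ enter: without a positive-probability fresh symbol the exchange would be only weakly improving and the contradiction would collapse (indeed, for $k>n$ the statement is false, since every length-$k$ tuple must then repeat). The remaining care is the verification that deleting the duplicated coordinate leaves all other coverage probabilities intact, which is exactly the consequence of $a_i=a_j$.
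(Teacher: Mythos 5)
Your proof is correct and follows essentially the same route as the paper's: an exchange argument that moves the mass on a repeated tuple onto a tuple covering one additional positive-probability symbol, leaving every other coverage probability $q_x$ unchanged and strictly decreasing the objective via strict monotonicity of $\ell_\alpha$. Your version simply makes explicit (the fresh symbol $x^*$, the computation $\tilde q_{x^*}=q_{x^*}+\delta\le 1$, and the role of $k<n$ with full support) what the paper's "type 1/2/3" bookkeeping leaves implicit.
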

The proof of Lemma~\ref{fact:alphaloss2noneq} is deferred to Appendix~\ref{app1}.
% \begin{proof}
% We consider the simple case where $k=2$, which essentially conveys the main idea of the proof. See Appendix for a sketch of the general proof. Let $\mathcal{X}=\{x_1,x_2,\dots,x_n\}$.  Let $P^*_{\hat{X}_1,\hat{X}_2}(x_i,x_j)+P^*_{\hat{X}_1,\hat{X}_2}(x_j,x_i)=q_{ij}$, for $i\neq j$, $P^*_{\hat{X}_1,\hat{X}_2}(x_i,x_i)=q_{ii}$, $P_X(x_i)=p_i$.
% \begin{multline}
% \mathbb{E}\left[\ell_{\alpha}^{(2)}(X,P^*_{\hat{X}_1,\hat{X}_2})\right]\\
% =\frac{\alpha}{\alpha-1}\sum_{i=1}^np_i{\left(1-\left(\sum_{j=1}^3q_{ij}\right)^{\frac{\alpha-1}{\alpha}}\right)}\label{eqn:nonequalityalpha}
% \end{multline}
% If $q_{11}\neq 0$, then we can construct a new strategy $P_{\hat{X}_{[1:k]}}$ by incorporating the value of $q_{11}$ into $q_{12}$ making the value of new $q_{11}$ equal to zero. Now the value of the second term in \eqref{eqn:nonequalityalpha} strictly decreases as the $k$-$\alpha$-loss is decreasing in the probability of correctness while retaining the values of all the other terms. This leads to a contradiction since $P^*_{\hat{X_1},\hat{X_2}}$ is assumed to be an optimal strategy. So, $q_{11}=0$. Repeating the same argument as above for $i\in[2:n]$, we get $q_{ii}=0$.
% \end{proof}
\begin{remark}\label{remark}
An important consequence of Lemma~\ref{fact:alphaloss2noneq} is that, if $P^*_{\hat{X}_{[1:k]}}$ is an optimal strategy for the optimization problem in \eqref{eqn:optimizationproblem}, then we have
\begin{align}\label{eqn:nonequality}
    \sum_x\mathrm{P}^*\left(\bigcup_{i=1}^k(\hat{X}_i=x)\right)=k,
\end{align}
where the probability $\mathrm{P}^*$ is taken with respect to an optimal strategy $P^*_{\hat{X}_{[1:k]}}$. Hence, it suffices to consider the optimization in \eqref{eqn:optimizationproblem} over all the strategies $P_{\hat{X}_{[1:k]}}$ satisfying \eqref{eqn:nonequality}.
\end{remark}
Let $\mathcal{X}=\{x_1,x_2,\dots,x_n\}$ be the support of $P_X$. A vector $(t_1,t_2,\dots,t_n)$ such that $\sum_{i=1}^nt_i=k$ is said to be \emph{admissible} if there exists a strategy $P_{\hat{X}_{[1:k]}}$ satisfying
% Let $\mathcal{P}^{(k)}_{\mathcal{X}}$ be the set of all probability distributions $P_{\tilde{X}}$ which constitute a solution (i.e., $P_{\hat{X}_{[1:k]}}$) to the following system of equations in the probability simplex.
\begin{align}\label{eqn:systemprob}
    t_i=\mathrm{P}\left(\bigcup_{j=1}^k(\hat{X}_j=x_i)\right ),\ \text{for all}\ i\in[1:n].
\end{align}
Equivalently, \eqref{eqn:systemprob} can be written as the following system of linear equations.
\begin{align}\label{eqn:linsystemk}
    t_i=\sum_{a_{[1:k]}:\bigcup\limits_{j=1}^k(a_j=x_i)} P_{\hat{X}_{[1:k]}}(a_{[1:k]}), \ \text{for all}\ i\in[1:n].
\end{align}
 In general, in order to determine whether a vector $(t_1,t_2,\dots,t_n)$ is admissible or not, we need to solve a linear programming problem (LPP) with number of variables and constraints that are polynomial in the support size of $P_X$, i.e, $n$. Nonetheless, the following lemma based on Farkas' lemma~\cite[Proposition 6.4.3]{Matousek} completely characterizes the necessary and sufficient conditions for the admissibility of a vector $(t_1,t_2,\dots,t_n)$.
\begin{lemma}\label{theorem:probclass1}
A vector $(t_1,t_2,\dots,t_n)$ such that $\sum\limits_{i=1}^nt_i=k$ is admissible if and only if $0\leq t_i\leq 1$, for all $i\in[1:n]$.
\end{lemma}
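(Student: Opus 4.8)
The plan is to prove the two directions separately, with all the substance in sufficiency. \emph{Necessity} is immediate: for any admissible vector each $t_i=\mathrm{P}\big(\bigcup_{j=1}^k(\hat X_j=x_i)\big)$ is a probability, so $0\le t_i\le 1$. For \emph{sufficiency}, fix $(t_1,\dots,t_n)$ with $0\le t_i\le 1$ and $\sum_i t_i=k$ and exhibit a strategy realizing \eqref{eqn:systemprob}. Guided by Remark~\ref{remark}, I would search only among strategies supported on tuples with distinct entries, i.e.\ a weight $w_S\ge 0$ on each $k$-element subset $S\subseteq\mathcal X$ (each $S$ realized by an arbitrary fixed ordering). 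Under such a strategy $\mathrm{P}\big(\bigcup_j(\hat X_j=x_i)\big)=\sum_{S:\,x_i\in S}w_S$, so it remains to decide feasibility of the linear system $\sum_{S:\,x_i\in S}w_S=t_i$ for $i\in[1:n]$, with $w\ge 0$. Summing these equations and using $|S|=k$ forces $\sum_S w_S=1$ automatically from $\sum_i t_i=k$, so normalization need not be imposed as a separate constraint.

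Writing this as $Aw=t$, $w\ge 0$ with $A$ the $n\times\binom{n}{k}$ incidence matrix ($A_{i,S}=1$ iff $x_i\in S$), I would invoke Farkas' lemma \cite[Proposition 6.4.3]{Matousek}: the system is infeasible iff there is some $y\in\mathbb{R}^n$ with $A^{\mathsf T}y\ge 0$ and $\langle t,y\rangle<0$. The first condition reads $\sum_{i\in S}y_i\ge 0$ for every $k$-subset $S$, equivalently $\min_S\sum_{i\in S}y_i\ge 0$, i.e.\ the sum of the $k$ smallest coordinates of $y$ is nonnegative. The crux is then a one-line LP comparison: over the set $\{t:0\le t_i\le1,\ \sum_i t_i=k\}$ the functional $t\mapsto\langle t,y\rangle$ is minimized at a vertex placing mass $1$ on the $k$ smallest coordinates of $y$, so $\min_{t}\langle t,y\rangle=\min_S\sum_{i\in S}y_i\ge 0$. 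Hence every $t$ in this set satisfies $\langle t,y\rangle\ge 0$, contradicting $\langle t,y\rangle<0$; so no Farkas certificate exists, the system is feasible, and the vector is admissible.

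I expect the main obstacle to be precisely making this comparison exact: recognizing that the dual feasibility condition produced by Farkas (``every $k$-subset sum of $y$ is nonnegative'') is the \emph{same} condition that pins down the minimum of $\langle t,\cdot\rangle$ over the constraint polytope, since both are governed by the identical greedy choice of the $k$ extreme coordinates of $y$. Everything else---the reduction to distinct tuples and the automatic normalization---is bookkeeping. As a sanity check and conceptual alternative, sufficiency is equivalent to the classical fact that $\{t:0\le t_i\le1,\ \sum_i t_i=k\}$ is the convex hull of the $0$--$1$ vectors with exactly $k$ ones (the $k$-th hypersimplex), whose vertices are exactly the subset indicators; the Farkas argument above is simply the separating-hyperplane form of this description.
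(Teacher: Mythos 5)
Your proposal is correct and follows essentially the same route as the paper: both directions are handled identically, and the sufficiency argument reduces to weights on $k$-subsets, forms the same $n\times\binom{n}{k}$ incidence system, and applies Farkas' lemma with the same dual condition that every $k$-subset sum of $y$ is nonnegative. The only difference is presentational: where you invoke the fact that $\langle t,y\rangle$ is minimized at the hypersimplex vertex indicating the $k$ smallest coordinates of $y$, the paper proves exactly that fact by an explicit chain of inequalities using $t_i\leq 1$, the ordering of the $y_i$, and $\sum_i t_i=k$.
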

The proof of Lemma~\ref{theorem:probclass1} is deferred to Appendix~\ref{app2}. We are now ready to prove Theorem~\ref{thm:minimalalphaloss}.
\begin{proof}[Proof of Theorem~\ref{thm:minimalalphaloss}]
From the definition of the minimal expected $\alpha$-loss for $k$ guesses in \eqref{eqn:optimizationproblem}, we have 
\begin{align}
    &\mathcal{ME}^{(k)}_\alpha(P_X)\nonumber\\
    &=\min_{P_{\hat{X}_{[1:k]}}}\frac{\alpha}{\alpha-1}\text{\small$\left[\sum_{i=1}^np_i\left(1-\mathrm{P}\left(\bigcup_{j=1}^k(\hat{X}_j=x_i)\right)^\frac{\alpha-1}{\alpha}\right)\right]$}\\
    &=\min_{P_{\hat{X}_{[1:k]}}}\frac{\alpha}{\alpha-1}\text{\small$\left[\sum_{i=1}^np_i\left(1-\mathrm{P}\left(\bigcup_{j=1}^k(\hat{X}_j=x_i)\right)^\frac{\alpha-1}{\alpha}\right)\right]$}\nonumber\\
  &\hspace{24pt}\text{s.t.}\ \sum\limits_{i=1}^n\mathrm{P}\left(\bigcup_{j=1}^k(\hat{X}_j=x_i)\right)=k\label{eqn:thmnoneq}\\
  &=\min_{t_1,\dots,t_n}\frac{\alpha}{\alpha-1}\left[\sum_{i=1}^np_i(1-t_i^{\frac{\alpha-1}{\alpha}})\right]\nonumber\\
  &\hspace{24pt}\text{s.t.}\ \sum_{i=1}^nt_i=k,\nonumber\\
  &\hspace{43pt}0\leq t_i\leq 1,\ i\in[1:n]\label{eqn:thmduality1},
\end{align}
where \eqref{eqn:thmnoneq} follows from Lemma~\ref{fact:alphaloss2noneq} and Remark~\ref{remark}, and \eqref{eqn:thmduality1} follows from  the change of variable $t_i=\mathrm{P}\left(\bigcup\limits_{j=1}^k(\hat{X}_j=x_i)\right)$ and Lemma~\ref{theorem:probclass1}.
Consider the Lagrangian 
\begin{multline}
    \mathcal{L}=\frac{\alpha}{\alpha-1}\left[\sum_{i=1}^np_i(1-t_i^{\frac{\alpha-1}{\alpha}})\right]+\lambda\left(\sum_{i=1}^nt_i-k\right)\\
    +\sum_{i=1}^n\mu_i(t_i-1)
\end{multline}
The Karush-Kuhn-Tucker (KKT) conditions~\cite[Chapter~5.5.3]{boyd_vandenberghe_2004} are given by
\begin{align}
    &\text{(Stationarity):}\ \frac{\partial\mathcal{L}}{\partial t_i}=0,i\in[1:n],\nonumber\\ &\text{i.e.,}\ t_i=\left(\frac{p_i}{\lambda+\mu_i}\right)^\alpha,i\in[1:n],\label{stationarity}\\
   &\text{(Primal feasibility):}\ \sum_{i=1}^nt_i=k,
    0\leq t_i\leq 1, i\in[1:n],\label{eqn:primalfeasibility}\\
   &\text{(Dual feasibility):}\ \mu_i\geq 0,i\in[1:n],\label{dualfeasibility}\\
&\text{(Complementary slackness):}\ \mu_i(t_i-1)=0, i\in[1:n]\label{slackness}.
\end{align}
Notice that for $\alpha>1$, $t^{\frac{\alpha-1}{\alpha}}$ is a concave function of $t$, meaning the overall objective function in \eqref{eqn:thmduality1} is convex. For $\alpha<1$, $t^{\frac{\alpha-1}{\alpha}}$ is a convex function of $t$, but since $\frac{\alpha}{\alpha-1}$ is negative, the overall function is again convex. Thus \eqref{eqn:thmduality1} amounts to a convex optimization problem. Now since KKT conditions are necessary and sufficient conditions for optimality in a convex optimization problem, it suffices to find values of $t_i$, $i\in[1:n]$, $\lambda$, $\mu_i$, $i\in[1:n]$ satisfying \eqref{stationarity}--\eqref{slackness} in order to solve the optimization problem~\eqref{eqn:thmduality1}. 

First we simplify the KKT conditions \eqref{stationarity}--\eqref{slackness} in the following manner. 
\begin{itemize}
\item For $i$ such that $\left(\frac{p_i}{\lambda}\right)^\alpha\leq 1$, we take $\mu_i=0$ and $t_i=\left(\frac{p_i}{\lambda}\right)^\alpha$.
\item For $i$ such that $\left(\frac{p_i}{\lambda}\right)^\alpha> 1$, we take $\mu_i=p_i-\lambda$ and $t_i=1$. Notice that for such $i$, we have $\mu_i>0$, since $p_i>\lambda$.
\end{itemize}
This is equivalent to choosing $t_i=\min\left\{\left(\frac{p_i}{\lambda}\right)^\alpha,1\right\}$ and $\mu_i=0$ or $\mu_i=p_i-\lambda$ depending on whether $t_i=\left(\frac{p_i}{\lambda}\right)^\alpha$ or $t_i=1$, respectively, for each $i\in[1:n]$. Notice that this choice is consistent with the KKT conditions \eqref{stationarity}--\eqref{slackness} except for that $\lambda$ has to be chosen appropriately satisfying $\sum_{i=1}^nt_i=k$ also. In effect, we have essentially reduced the KKT conditions \eqref{stationarity}--\eqref{slackness} to the following equations by eliminating $\mu_i$'s:
\begin{align}
    &t_i=\min\left\{\left(\frac{p_i}{\lambda}\right)^\alpha,1\right\},i\in[1:n],\label{eqn:new1}\\
    &\sum_{i=1}^nt_i=k\label{eqn:new2}.
\end{align}

We solve the equations \eqref{eqn:new1} and \eqref{eqn:new2} by considering the following $k$ mutually exclusive and exhaustive cases (clarified later) based on $P_X$. 

\medskip
\noindent \underline{Case $1$}  $\left(\frac{p_1^\alpha}{\sum_{i=1}^np_i^\alpha}\leq \frac{1}{k}\right)$:\\
Consider the choice
\begin{align}
\lambda=\left(\frac{\sum_{i=1}^np_i^\alpha}{k}\right)^\frac{1}{\alpha},\ 
    t_i=\frac{kp_i^\alpha}{\sum_{j=1}^np_j^\alpha},i\in[1:n].
\end{align}
This choice satisfies \eqref{eqn:new1} and \eqref{eqn:new2} since $\frac{kp_1^\alpha}{\sum_{i=1}^np_i^\alpha}\leq 1$ and $p_1\geq p_2\dots\geq p_n$. %Also, this choice clearly satisfies \eqref{eqn:new2}.

\bigskip

\noindent\underline{Case `$s$'} ($2\leq s\leq k$) $\left(\frac{(k-s+2)p_{s-1}^\alpha}{\sum_{i=s-1}^np_i^\alpha}>1, \frac{(k-s+1)p_s^\alpha}{\sum_{i=s}^np_i^\alpha}\leq 1\right)$:\\
Consider the choice
\begin{align}
    \lambda&=\left(\frac{\sum_{i=s}^np_i^\alpha}{k-s+1}\right)^\frac{1}{\alpha},\\
    t_i&=1, i\in[1:s-1], t_i=\frac{(k-s+1)p_i^\alpha}{\sum_{j=s}^np_j^\alpha}, i\in[s:n]\label{eqn:newcam}.
\end{align}
This choice satisfies \eqref{eqn:new1}
\begin{itemize}
    \item for $i\in[1:s-1]$ because $\frac{(k-s+2)p_{s-1}^\alpha}{\sum_{i=s-1}^np_i^\alpha}>1$ and $p_1\geq p_2\geq\dots\geq p_{s-1}$, and
    \item for $i\in[s:n]$ because $\frac{(k-s+1)p_s^\alpha}{\sum_{i=s}^np_i^\alpha}\leq 1$ and $p_s\geq p_{s+1}\geq\dots\geq p_n$.
\end{itemize} 
Also, this choice clearly satisfies \eqref{eqn:new2}. 
Finally, notice that the condition for Case `$s$', $2\leq s\leq n$, can be written as 
\begin{align}
    \frac{(k-i+1)p_{i}^\alpha}{\sum\limits_{j=i}^np_j^\alpha}>1, \ \text{for}\ i\in[1:s-1], \frac{(k-s+1)p_s^\alpha}{\sum\limits_{i=s}^np_i^\alpha}\leq 1
\end{align}
since $\frac{(k-s+2)p_{s-1}^\alpha}{\sum\limits_{i=s-1}^np_i^\alpha}>1$ and $p_1\geq p_2\geq\dots\geq p_{s-1}$. This proves that the cases considered above are mutually exclusive and exhaustive, and together with the case-wise analysis gives the expression for the minimal expected $\alpha$-loss for $k$ guesses as presented in Theorem~\ref{thm:minimalalphaloss}.
\end{proof}
The proof of Corollary~\ref{corollary1} follows by taking limit $\alpha\rightarrow 1$ using L'H\^{o}pital's rule in the result of Theorem~\ref{thm:minimalalphaloss} and rearranging the terms. The proof of Corollary~\ref{corollary2} follows by taking limit $\alpha\rightarrow \infty$ in Theorem~\ref{thm:minimalalphaloss}.
\begin{proof}[Proof of Theorem~\ref{thm2}]
From the definition of $\alpha$-leakage with $k$ guesses in \eqref{defn:kalphaleakage}, we have
\begin{align}
    &\mathcal{L}^{(k)}_\alpha(X\rightarrow Y)\nonumber\\ &=\frac{\alpha}{\alpha-1}\log{\frac{\max\limits_{P_{\hat{X}_{[1:k]}|Y}}\mathbb{E}\left[\mathrm{P}\left(\bigcup\limits_{i=1}^k(\hat{X}_i=X)|Y\right)^{\frac{\alpha-1}{\alpha}}\right]}{\max\limits_{P_{\hat{X}_{[1:k]}}}\mathbb{E}\left[\mathrm{P}\left(\bigcup\limits_{i=1}^k(\hat{X}_i=X)\right)^{\frac{\alpha-1}{\alpha}}\right]}}\\
    &=\frac{\alpha}{\alpha-1}\log{\frac{k^{\frac{\alpha-1}{\alpha}}\exp{(\frac{1-\alpha}{\alpha}H_\alpha^A(X|Y))}}{k^{\frac{\alpha-1}{\alpha}}\exp{(\frac{1-\alpha}{\alpha}H_\alpha(X))}}}\label{eqn:leakage1}\\
     &=\frac{\alpha}{\alpha-1}\log{\frac{\exp{(\frac{1-\alpha}{\alpha}H_\alpha^A(X|Y))}}{\exp{(\frac{1-\alpha}{\alpha}H_\alpha(X))}}}\\
    &=\mathcal{L}_\alpha^{(1)},
\end{align}
% Using \eqref{defn:kalphaleakage} and the first case in Theorem~\ref{thm:minimalalphaloss} which gives an expression for minimal expected $k$-$\alpha$-loss when the tilted distribution is lesser than or equal to $\frac{1}{k}$, we have
% \begin{align}
%     \mathcal{L}_{\alpha}^{(k)}&=\frac{\alpha}{\alpha-1}\log{\frac{k^{\frac{\alpha-1}{\alpha}}\exp{(\frac{1-\alpha}{\alpha}H_\alpha^A(X|Y))}}{k^{\frac{\alpha-1}{\alpha}}\exp{(\frac{1-\alpha}{\alpha}H_\alpha(X))}}}\label{eqn:leakage1}\\
%     &=\frac{\alpha}{\alpha-1}\log{\frac{\exp{(\frac{1-\alpha}{\alpha}H_\alpha^A(X|Y))}}{\exp{(\frac{1-\alpha}{\alpha}H_\alpha(X))}}}\nonumber\\
%     &=\mathcal{L}_\alpha^{(1)}\nonumber,
% \end{align}
where \eqref{eqn:leakage1} follows from Theorem~\ref{thm:minimalalphaloss}, in particular from the case when $s^*=1$ since $P_{X|Y}^{(\alpha)}(x|y)\leq \frac{1}{k}$, for all $x,y$ and $P_{X}^{(\alpha)}(x)\leq \frac{1}{k}$, for all $x$, and $H_\alpha^A(X|Y)$ in \eqref{eqn:leakage1} is the Arimoto conditional entropy~\cite{arimoto1977information} defined as $H_\alpha^A(X|Y)=\frac{\alpha}{1-\alpha}\log{\sum\limits_y\left(\sum\limits_xP_{XY}(x,y)^\alpha\right)^\frac{1}{\alpha}}$.
\end{proof}
\section{Conclusion}\label{section:conclusion}
There are many questions to be further studied. For example, analogously to maximal leakage \cite{Issaetal} and maximal $\alpha$-leakage \cite{LiaoKS20}, we can define a maximal version of $\alpha$-leakage with $k$ guesses. As shown in \cite{Issaetal}, for $\alpha=\infty$, this quantity does not change with $k$; it would be interesting to understand whether this is also true for other $\alpha$.

%For example, for $\alpha=\infty$, Issa et al.~\cite{Issaetal} studied a \emph{maximal} version of $\alpha$-leakage (maximal leakage) and showed that it remains unchanged with any number of guesses. It would be interesting to study a similar maximal version of the $\alpha$-leakage with $k$ guesses, for $\alpha\in(0,\infty]$ and quantify  robust to the number of guesses. 
% It would be interesting to study a \emph{maximal} version of the $\alpha$-leakage with $k$ guesses which captures the information leaked about \emph{any} function of the random variable $X$ through $Y$ analogous to maximal leakage~\cite[Definition~1]{Issaetal} and maximal $\alpha$-leakage~\cite[Definition~6]{LiaoKS20}. 
\appendices
\section{Proof of Lemma~\ref{fact:alphaloss2noneq}}\label{app1}
\balance
%\begin{proof}[Proof of Lemma~\ref{fact:alphaloss2noneq}]
Let $\mathcal{X}=\{x_1,x_2,\dots,x_n\}$ and $P_X(x_i)=p_i$, for $i\in[1:n]$. Consider $a_{[1:k]}$ such that $a_i=a_j$ for some $i\neq j$. There exists a $b_{[1:k]}$ such that for each $i\in[1:k]$, we have $a_i=b_j$ for some $j$ and $b_r\neq a_j$ for some $r$ and any $j$. Consider
\begin{align}
    \frac{\alpha}{\alpha-1}\text{\small$\left[\sum_{i=1}^np_i\left(1-\mathrm{P}^*\left(\bigcup_{j=1}^k(\hat{X}_j=x_i)\right)^\frac{\alpha-1}{\alpha}\right)\right]$}.\label{eqn:appendix}
\end{align}
Let $\mathcal{A}$ and $\mathcal{B}$ denote the sets of all multiset permutations of $a_{[1:k]}$ and $b_{[1:k]}$, respectively, when $a_{[1:k]}$ and $b_{[1:k]}$ are treated as  multisets. Let $q_{a_1,a_2,\dots,a_k}:=\sum_{r_{[1:k]}\in\mathcal{A}}P_{\hat{X}_{[1:k]}}(r_{[1:k]})$ and $q_{b_1,b_2,\dots,b_k}:=\sum_{r_{[1:k]}\in\mathcal{B}}P_{\hat{X}_{[1:k]}}(r_{[1:k]})$.
Each term out of the $n$ terms in \eqref{eqn:appendix} will either contain both $q_{a_{[1:k]}}$ and $q_{b_{[1:k]}}$ (say, type 1), contain just $q_{b_{[1:k]}}$ alone (say, type 2), or does not contain both (say, type 3). We now construct a new strategy $P_{\hat{X}_{[1:k]}}$ by incorporating the value of $q_{a_{[1:k]}}$ into $q_{b_{[1:k]}}$ making the value of new $q_{a_{[1:k]}}$ equal to zero. Now the values of the terms of type 2 strictly decrease as the $\alpha$-loss function is strictly decreasing in its argument while retaining the values of the terms of types 1 and 3. This leads to a contradiction since $P^*_{X_{[1:k]}}$ is assumed to be an optimal strategy. So, $P_{\hat{X}_{[1:k]}}(a_{[1:k]})=0$. Repeating the same argument as above for all such $a_{[1:k]}$ s.t. $a_i=a_j$, for some $i\neq j$ completes the proof. 
%\end{proof}
\section{Proof of Lemma~\ref{theorem:probclass1}}\label{app2}
%\begin{proof}[Proof of Lemma~\ref{theorem:probclass1}]
 \noindent\underline{`Only if' part}: Suppose a vector $(t_1,t_2,\dots,t_n)$ is admissible. Then there exists $P_{\hat{X}_{[1:k]}}$ satisfying \eqref{eqn:linsystemk}. Using \eqref{eqn:systemprob}, since $t_i$ is probability of a certain event, we have 
 \begin{align*}
     0\leq t_i \leq 1,\ \text{for}\ i\in[1:n]. 
 \end{align*}
 \underline{`If' part}: Suppose $0\leq t_i\leq 1$, for $i\in[1:n]$. Summing up all the equations in \eqref{eqn:linsystemk} over $i\in[1:n]$ and using $\sum_{i=1}^nt_i=k$, we get
 \begin{align*}
     P_{\hat{X}_{[1:k]}}(a_{[1:k]})=0,\ \text{for all}\ a_{[1:k]}\ \text{s.t.}\ a_i=a_j, \ \text{for some}\ i\neq j.
 \end{align*}
 With this, \eqref{eqn:linsystemk} can be written in the form of system of linear equation only in terms of non-negative variables of the form 
 \begin{align}
 q_{i_1,i_2,\dots,i_k}:=\sum\limits_{\sigma\in S_n}P_{\hat{X}_{[1:k]}}(x_{i_{\sigma(1)}},x_{i_{\sigma(2)}},\dots,x_{i_{\sigma(n)}}),
 \end{align}
 where $i_1,i_2,\dots,i_k$ are all distinct and belong to $[1:n]$. Here the sum is computed over all the permutations $\sigma$ of the set $\{1,2,\dots,n\}$. The set of all such permutations is denoted by $S_n$. With this, the system of equations in \eqref{eqn:linsystemk} can be written in the form $AQ=b$, $Q\geq 0$. Here $A$ is a $n\times\binom{n}{k}$-matrix, where the rows are indexed by $i\in[1:n]$ and columns are indexed by $(i_1,i_2,\dots,i_k)$, where $i_1,i_2,\dots,i_k$ are all distinct and belong to $[1:n]$. In particular, in the column indexed by $(i_1,i_2,\dots,i_k)$, the entry of $A$ corresponding to $i_j^{\text{th}}$ row is $1$, for $j\in[1:k]$. All the remaining entries of the matrix $A$ are zeros. $Q$ is $\binom{n}{k}$-length vector of variables of the form $q_{i_1,i_2,\dots,i_k}$. ${b}$ is an $n$-length vector with $b_i=t_i$. We are interested in the feasibility of the system $AQ=b$, $Q\geq 0$. We use the Farkas' lemma~\cite[Proposition 6.4.3]{Matousek} in linear programming for checking this. It states that the system $A{Q}={b}$ has a non-negative solution if and only if every ${y}\in\mathbbm{R}^n$ with ${y}^\top A\geq 0$ also implies ${y}^\top {b}\geq 0$. For our problem, ${y}^\top A\geq 0$ is equivalent to 
 \begin{align}\label{eqn:farkashyp}
     \sum_{j=1}^ky_{i_j}\geq 0,\ \text{for all distinct}\ i_1,i_2,\dots,i_k\in[1:n]. 
 \end{align}
 Without loss of generality, let us assume that $y_i\leq y_{i+1}$, $i\in[1:n-1]$. Then \eqref{eqn:farkashyp} is equivalent to
 \begin{align}\label{eqn:farkashyp1}
     \sum_{i=1}^ky_i\geq 0.
 \end{align}
 Now consider
 \begin{align}
     &\sum_{i=1}^ny_it_i\nonumber\\
     &=\sum_{i=1}^ky_it_i+y_{k+1}t_{k+1}+\sum_{i=k+2}^ny_it_i\\
     &=\sum_{i=1}^k{y_i}+\sum_{i=1}^ky_i(t_i-1)+y_{k+1}t_{k+1}+\sum_{i=k+2}^ny_it_i\\
     &\geq\sum_{i=1}^ky_i+y_{k+1}\sum_{i=1}^k(t_i-1)+y_{k+1}t_{k+1}+\sum_{i=k+2}^ny_it_i\label{theorem:probclass1proof1}\\
     &\geq\sum_{i=1}^k{y_i}+y_{k+1}\sum_{i=1}^k(t_i-1)+y_{k+1}t_{k+1}+y_{k+1}\sum_{i=k+2}^nt_i\label{theorem:probclass1proof2}\\
     &=\sum_{i=1}^ky_i+y_{k+1}\left(\sum_{i=1}^nt_i-k\right)\\
     &={\sum_{i=1}^ky_i}\label{theorem:probclass1proof3}\\
     &\geq 0\label{theorem:probclass1proof4},
 \end{align}
 where \eqref{theorem:probclass1proof1} follows because $y_i\leq y_{k+1}$ and $t_i-1\leq 0$, for $i\in[1:k]$, \eqref{theorem:probclass1proof2} follows because $y_i\geq y_{k+1}$, for $i\in[k+2:n]$, and \eqref{theorem:probclass1proof3} follows because $\sum_{i=1}^nt_i=k$, \eqref{theorem:probclass1proof4} follows from \eqref{eqn:farkashyp1}. Now using the Farkas' lemma, $AQ=b$, has a non-negative solution, i.e., the vector $(t_1,t_2,\dots,t_n)$ is admissible.
%\end{proof}
%\end{appendix}
\bibliographystyle{IEEEtran}
\bibliography{Bibliography}
\end{document}